\newcommand{\NP}{{\bf NP}}
\newcommand{\remove}[1]{}
\newcommand{\OPT}{\ensuremath{\mathsf{OPT}}\xspace}
\newcommand{\eps}{\epsilon}
\newtheorem{theorem}{\bf Theorem}[section]
\newtheorem{corollary}[theorem]{\bf Corollary}
\newtheorem{lemma}[theorem]{\bf Lemma}
\newtheorem{fact}[theorem]{\bf Fact}
\newtheorem{definition}[theorem]{\bf Definition}
\begin{document}

\title{Center-based Clustering under Perturbation Stability\tnoteref{t1}}
\tnotetext[t1]{This work was supported in part by the National Science Foundation under grant CCF-0830540, as well as by CyLab at Carnegie Mellon under grants DAAD19-02-1-0389 and W911NF-09-1-0273 from the Army Research Office.}
\author{Pranjal Awasthi}
\ead{pawasthi@cs.cmu.edu}
\address{Carnegie Mellon University, Pittsburgh, PA 15213-3891}
\author{Avrim Blum}
\ead{avrim@cs.cmu.edu}
\address{Carnegie Mellon University, Pittsburgh, PA 15213-3891}
\author{Or Sheffet}
\ead{osheffet@cs.cmu.edu}
\address{Carnegie Mellon University, Pittsburgh, PA 15213-3891}


\begin{abstract}
Clustering under most popular objective functions is NP-hard, even to
approximate well, and so unlikely to be efficiently solvable in the
worst case. Recently, Bilu and Linial~\cite{Bilu09} suggested an
approach aimed at bypassing this computational barrier by using
properties of instances one might hope to hold in practice.  In particular, they
argue that instances in practice should be stable to small
perturbations in the metric space and give an efficient algorithm for
clustering instances of the Max-Cut problem that are stable to
perturbations of size $O(n^{1/2})$.  In addition, they conjecture that
instances stable to as little as $O(1)$ perturbations should be
solvable in polynomial time.  In this paper we prove that this
conjecture is true for any center-based clustering objective (such as
$k$-median, $k$-means, and $k$-center).  Specifically, we show we can
efficiently find the optimal clustering assuming only stability to
factor-3 perturbations of the underlying metric in spaces without
Steiner points, and stability to factor $2+\sqrt{3}$ perturbations for general
metrics.  In particular, we show for such instances that the popular
Single-Linkage algorithm combined with dynamic programming will find
the optimal clustering.  We also present NP-hardness results under a
weaker but related condition.
\end{abstract}

\begin{keyword}
Clustering \sep $k$-median \sep $k$-means \sep Stability Conditions
\end{keyword}
\maketitle

\section{Introduction}

Problems of clustering data arise in a wide range of different areas
-- clustering proteins by function, clustering documents by topic, and
clustering images by who or what is in them, just to name a few.  In
this paper we focus on the popular class of center based clustering
objectives, such as $k$-median, $k$-center and $k$-means.  Under these
objectives we not only partition the data into $k$ subsets, but we
also assign $k$ special points, called the {\em centers}, one in each
cluster.  The quality of a solution is then measured as a function of
the distances between the data points and their centers.  For example,
in the $k$-median objective, the goal is to minimize the sum of
distances of all points from their nearest center, and in the
$k$-means objective, we minimize the sum of the same distances
squared.  As these are \NP-hard problems~\cite{Guha98,Jain02,
Dasgupta08}, there has been substantial work on approximation
algorithms~\cite{Arora98,Arya01,Bartal01,Charikar99,Kumar04,delaVega03}
with both upper and lower bounds on approximability of these and other
objective functions. Note that we are especially
interested in the case that $k$ is part of the input and {\em not} a
constant.

Recently, Bilu and Linial \cite{Bilu09}, focusing on the Max-Cut
problem~\cite{Garey90}, proposed considering instances where the optimal clustering
is optimal not only under the given metric, {\em but also under any
bounded multiplicative perturbation of the given metric}.  This is 
motivated by the fact that in practice, distances between data points
are typically just the result of some heuristic measure (e.g.,
edit-distance between strings or Euclidean distance in some feature
space) rather than true ``semantic distance'' between objects.  Thus,
unless the optimal solution on the given distances is correct by pure
luck, it likely is correct on small perturbations of the
given distances as well.  
Bilu and Linial \cite{Bilu09} analyze Max-Cut instances of this
type
and show that for instances that are stable to perturbations of
multiplicative factor roughly $O(n^{1/2})$, one can retrieve the
optimal Max-Cut in polynomial time.  However, they conjecture that
stability up to only {\em constant} magnitude perturbations should be
enough to solve the problem in polynomial time. In this paper we show
that this conjecture is indeed true for $k$-median and $k$-means
objectives and in fact for any well-behaved center-based objective
function (see Definition~\ref{def:separable-objective}).

\subsection{Main Result}

First, let us formally define the notion due to \cite{Bilu09} of
stability under multiplicative perturbations, stated in this context.

\begin{definition}
Given a metric $(S, d)$, and $\alpha > 1$, we say a function
$d':S\times S \rightarrow \mathbb{R}_{\geq 0}$ is an
\emph{$\alpha$-perturbation} of $d$, if for any $x,y\in S$ it holds
that \[d(x,y) \leq d'(x,y) \leq \alpha d(x,y)\] 
\end{definition}
Note that $d'$ may be any non-negative function, and need not be a
metric.

\begin{definition}
Suppose we have a clustering instance composed of $n$ points residing
in a metric $(S,d)$ and an objective function $\Phi$  we wish to
optimize. We call the  clustering instance \emph{$\alpha$-perturbation
resilient} for $\Phi$ if for any $d'$ which is an $\alpha$-perturbation
of $d$, the (only) optimal clustering of $(S,d')$ under $\Phi$ is 
identical, as a partition of points into subsets, to the optimal
clustering of $(S,d)$ under $\Phi$.
\end{definition}

We will in particular be concerned with {\em separable}, {\em
center-based} clustering objectives $\Phi$ (which include $k$-median,
$k$-means, and $k$-center among others).

\begin{definition}
\label{def:separable-objective}
A clustering objective is {\em center-based} if the optimal solution
can be defined by $k$ points $c^*_1, \ldots, c^*_k$ in the metric
space called {\em centers} such that every data point is assigned to
its nearest center.  Such a clustering objective is {\em separable} if
it furthermore satisfies the following two conditions: 
\begin{itemize}
\item The objective function value of a given clustering is either a
(weighted) sum or the maximum of the individual cluster scores. 
\item Given a proposed single cluster, its score can be computed in
polynomial time. 
\end{itemize}
\end{definition}

Our main result is that we can efficiently find the optimal
clustering for perturbation-resilient instances of separable
center-based clustering objectives.  In particular, we get an
efficient algorithm for 3-perturbation-resilient instances when the
metric $S$ is defined only over data points,
and for $(2 + \sqrt{3})$-perturbation-resiliant instances
for general metrics.

\begin{theorem}
\label{thm-3-perturbation-resilience}
For $\alpha \geq 3$ (in the case of finite metrics defined only over
the data) or $\alpha \geq 2+\sqrt{3}$ (for general metrics), there is
a polynomial-time
algorithm that finds the optimal clustering of $\alpha$-perturbation
resilient instances for any given separable center-based clustering
objective.
\end{theorem}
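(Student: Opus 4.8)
The plan is to decouple the argument into a \emph{structural} part, which uses perturbation resilience to prove that the optimal clustering has strong geometric separation, and an \emph{algorithmic} part, which shows that any clustering with this separation is recovered by Single-Linkage together with dynamic programming. Throughout, let $\mathcal{C}=\{C_1,\dots,C_k\}$ denote the (unique) optimal clustering of $(S,d)$ under $\Phi$, with centers $c^*_1,\dots,c^*_k$. The end goal of the structural part is to show that each optimal cluster $C_i$ appears as a node of the Single-Linkage dendrogram, i.e.\ that $\mathcal{C}$ is a \emph{pruning} of that tree; once this is known, a dynamic program over the dendrogram selects the best $k$-pruning for the given objective.

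First I would establish a \emph{center-proximity} inequality: for every $p\in C_i$ and every $j\neq i$, $\alpha\, d(p,c^*_i) < d(p,c^*_j)$. To prove this I assume the reverse and exhibit an $\alpha$-perturbation under which $\mathcal{C}$ is no longer the unique optimum, contradicting resilience. The natural perturbation stretches every distance by the full factor $\alpha$ except the one I wish to keep short: setting $d'(x,y)=\alpha d(x,y)$ for all pairs while leaving $d'(p,c^*_j)=d(p,c^*_j)$ keeps $d\le d'\le \alpha d$, so $d'$ is legal (it need not be a metric). Since within each cluster $d'=\alpha d$, the optimal centers under $d'$ coincide with those under $d$, and the effect of moving $p$ from $C_i$ to $C_j$ is to replace its contribution $\alpha\,d(p,c^*_i)$ by $d(p,c^*_j)\le \alpha\,d(p,c^*_i)$ while no other point's contribution increases; this makes the modified clustering at least as good, contradicting uniqueness. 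Because $\Phi$ is separable the same comparison works whether the objective aggregates by sum or by maximum, and the monotone dependence on distances (e.g.\ the squaring in $k$-means) does not change the form of the inequality.

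Next I would convert center proximity into separation between points of different clusters. In the finite-metric case the centers are data points, $c^*_i\in C_i$ and $c^*_j\in C_j$, so for $p\in C_i$ and $q\in C_j$ the triangle inequality applied through $c^*_j$, combined with the center-proximity bound for $q$, yields after a short calculation $d(p,q) > \max\{d(p,c^*_i),\,d(q,c^*_j)\}$ once $\alpha\ge 3$; that is, every point is strictly closer to its own center than to any point of another cluster, with room to spare. For general metrics $c^*_j$ may be a Steiner point outside the data, so I can no longer treat it as a member of $C_j$ and the triangle-inequality reductions lose a factor; carrying the bookkeeping through degrades the threshold precisely to $\alpha\ge 2+\sqrt{3}$. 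I expect this step, and in particular the exact constants and the Steiner-point case, to be the main obstacle: the separation must be strong enough that no inter-cluster edge is ever shorter than the edges Single-Linkage uses to knit a cluster together internally, and securing this for arbitrary metrics may require perturbations beyond the single-point move above.

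Finally, given separation strong enough that each cluster is internally linked below the inter-cluster gap, Single-Linkage merges all of $C_i$ into one component before merging it with anything outside, so every $C_i$ is a node of the resulting dendrogram and $\mathcal{C}$ is a pruning of it. It then remains to search over prunings. I would run a dynamic program on the dendrogram: for each node $v$ and each budget $t\le k$, store the best achievable objective value for the leaves below $v$ using exactly $t$ clusters, each a descendant node of $v$; the optimal single-cluster score of the leaves under $v$ is computable in polynomial time by separability, and the two children are combined by summation or by maximum according to the form of $\Phi$. The value at the root with budget $k$ is the optimal clustering of $(S,d)$, which by resilience is exactly $\mathcal{C}$, completing the argument and proving Theorem~\ref{thm-3-perturbation-resilience}.
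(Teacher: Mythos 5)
Your overall architecture matches the paper's: derive center proximity from perturbation resilience, convert it into geometric separation, and then recover the optimal clustering as a pruning of the Single-Linkage tree via dynamic programming. Your perturbation for the center-proximity step (scale everything by $\alpha$ except the one pair $p,c^*_j$) is a legitimate variant of the paper's (which scales only intra-cluster distances in $C^*_i$ and $C^*_j$), and your dynamic-programming step is the paper's. The problem is the middle step, which is where all the real work lives, and your proposal does not actually carry it out.

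The property you establish --- for $p\in C_i$, $q\in C_j$, $d(p,q) > \max\{d(p,c^*_i),\, d(q,c^*_j)\}$ (in fact $>(\alpha-1)$ times these, as in Corollary~\ref{cor-distance-pts-different-clusters}) --- is \emph{pointwise} separation, and it is strictly weaker than what Single-Linkage correctness requires. What is needed is the min-stability property of Definition~\ref{def:min-stability}: for \emph{every} strict subset $A \subsetneq C_i$, the nearest point outside $A$ lies in $C_i\setminus A$ rather than in another cluster. Pointwise separation does not rule out the dangerous configuration in which $A$ contains $c^*_i$ and everything near it, while $C_i\setminus A$ consists of far-flung points of $C_i$: then $d_{\min}(A, C_i\setminus A)$ could a priori exceed $d_{\min}(A,C_j)$ even though every point is much closer to its own center than to any foreign point, and Single-Linkage would merge across clusters. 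Excluding this is exactly the content of Lemma~\ref{lem-inner-cluster-attraction}, which splits on whether $c^*_i\in A$ and, in the hard case, applies center proximity to the nearest point $q\in C_i\setminus A$ via the chain $d(q,c^*_j)\le d(q,c^*_i)+d(c^*_i,p)+d(p,p')+d(p',c^*_j)<3d(q,c^*_i)$; this is where the constant $3$ actually comes from, not from the pointwise bound. Likewise, for general metrics you assert that ``carrying the bookkeeping through'' yields $2+\sqrt{3}$, but the paper's Lemma~\ref{lem-general-metric} needs a specific weighted combination of three center-proximity inequalities (applied to $p$, $p'$, and $q$) with weights $1-\frac{1}{\alpha+1}-\frac{1}{\alpha-1}$, $\frac{1}{\alpha+1}$, $\frac{1}{\alpha-1}$ to conclude $d(p,p')>d(q,p)$; nothing in your sketch produces this threshold. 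You flag this step as ``the main obstacle,'' which is accurate, but as written the proof has a hole precisely there: the subset-level min-stability claim is asserted from a pointwise bound that does not imply it.
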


The algorithm, described in Section~\ref{subsec:algorithm}, turns out
to be quite simple.  As a first step, it runs the classic
single-linkage algorithm, but unlike the standard approach of halting
when $k$ clusters remain, it runs the algorithm until {\em all} points
have been merged into a single cluster and keeps track of the entire
tree-on-clusters produced.\footnote{The example depicted in
Figure~\ref{fig-single-linkage-till-k-not-enough} proves that indeed,
halting the Single-Linkage algorithm once $k$ clusters are formed may
fail on certain $\alpha$-perturbation resilient instances.}  Then, the
algorithm's second step is to apply dynamic programming to this
hierarchical clustering to identify the best $k$-clustering that is
present within the tree.  Using a result of Balcan et
al.~\cite{Balcan08} we show that the resulting clustering obtained is indeed the optimal one.
Albeit being very different, our approach resembles, in spirit, the work
of Bartal~\cite{Bartal98},
Abraham et al~\cite{ABCDGKNS05} and R\"{a}cke~\cite{Racke08} in the sense that we
reduce the problem of retrieving an optimal solution from a
general instance to a tree-like instance (where it is poly-time solvable).

Our algorithms use only a weaker property, which we call
center-proximity (see Section~\ref{subsec:properties}), that is
implied by perturbation-resilience.  We then complement these results
with a lower bound showing that for the problem of $k$-median on
general metrics, for any $\eps > 0$, there exist \NP-hard instances
that satisfy $(3-\epsilon)$-center proximity.\footnote{We note that
while our belief was that allowing Steiner points in the lower bound
was primarily a technicality, Balcan et al.~(M.F. Balcan, personal
communication) have recently shown this is not the case, giving a
clever algorithm that finds the optimal clustering for $k$-median
instances in finite metrics when $\alpha = 1+\sqrt{2}$.}

\subsection{Related work}
There have been a number of investigations of different notions of
stability for the problem of clustering.  For example, Ostrovsky et
al.~\cite{Ostrovsky06} consider a $k$-means instance to be stable if
the optimal $k$-clustering is substantially cheaper than the optimal
$(k-1)$-clustering under this objective.  They present an efficient
algorithm for finding near-optimal $k$-means clusterings when this gap
is large, and these results were subsequently strengthened to apply to
smaller gaps in \cite{ABS10}.  Balcan et al.~\cite{Balcan09}
consider instead a clustering instance to be stable if good approximations to
the given objective are guaranteed to be close, as clusterings, to a
desired ground-truth partitioning.  This is motivated by the fact that
when the true goal is to match some unknown
correct answer (e.g., to correctly cluster proteins by their
function), this is an implicit assumption already being made when
viewing approximation ratio as a good performance measure.  Balcan et
al.~\cite{Balcan09} show that in fact this condition can be used to
bypass approximation hardness results for a number of clustering
objectives including $k$-median and $k$-means.  Here they show that
if all $(1+\alpha)$-approximations to the objective are $\delta$-close
to the desired clustering in terms of how points are partitioned, then
one can efficiently get $O(\delta/\alpha)$-close to the desired
clustering.  Ben-David et al.~\cite{Ben-David06,Ben-David07} consider
a notion of stability of a clustering {\em algorithm}, which is called
stable if it outputs similar clusters for different sets of $n$ input
points drawn from the same distribution. For $k$-means, the work of
Meila~\cite{Meila06} discusses the opposite direction -- classifying
instances where an approximated solution for $k$-means is close to the
target clustering.


\section{Proof of Main Theorem}
\label{sctn-alpha-perturbation}

\subsection{Properties of Perturbation Resilient Instances}
\label{subsec:properties}

We begin by deriving other properties which every $\alpha$-perturbation
resilient clustering instance must satisfy.

\begin{definition}
\label{def:center-proximity}
Let $p\in S$ be an arbitrary point, let $c^*_i$ be the center $p$ is assigned to in the optimal clustering, and let $c^*_j\neq c^*_i$ be any other center in the optimal clustering. We say a clustering instance satisfies the \emph{$\alpha$-center proximity} property if for any $p$ it holds that \[d(p,c^*_j) > \alpha d(p,c^*_i)\]
\end{definition}

\begin{fact}
\label{fact-dist-to-other-centers}
If a clustering instance satisfies the $\alpha$-perturbation resilience property, then it also satisfies the $\alpha$-center proximity property.
\end{fact}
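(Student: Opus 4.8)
The plan is to prove the contrapositive by exhibiting an explicit bad perturbation. Suppose center proximity fails: there is a data point $p$, assigned to center $c^*_i$ in the optimal clustering $\mathcal{C}^*$, together with some other center $c^*_j$, such that $d(p,c^*_j) \le \alpha\, d(p,c^*_i)$. First I would construct an $\alpha$-perturbation $d'$ of $d$ and a partition different from $\mathcal{C}^*$ whose cost under $d'$ is no larger than that of $\mathcal{C}^*$. Since the $\alpha$-perturbation resilience hypothesis forces $\mathcal{C}^*$ to be the unique optimum under every $\alpha$-perturbation, this produces the desired contradiction.

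Concretely, I would take $d'$ to inflate every distance by the factor $\alpha$, except that I keep the single pair fixed, $d'(p,c^*_j) = d(p,c^*_j)$. Because $d \le d' \le \alpha d$ holds on every pair (the exceptional pair satisfies $d=d'\le \alpha d$), $d'$ is a legitimate $\alpha$-perturbation. Let $\mathcal{C}'$ be obtained from $\mathcal{C}^*$ by moving only $p$ from its cluster $A_i$ into the cluster $A_j$ containing $c^*_j$; this is genuinely a different partition. (The degenerate case $A_i=\{p\}$ forces $d(p,c^*_i)=0$, since $p$ itself is an available center, and then center proximity holds trivially, so I may assume $A_i$ retains another point.) The goal is then to show $\mathrm{cost}_{d'}(\mathcal{C}') \le \mathrm{cost}_{d'}(\mathcal{C}^*)$.

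Since $\Phi$ is separable, each cost is a sum (or maximum) of per-cluster scores, and a cluster's score is computed with its own cost-minimizing center. All clusters other than $A_i,A_j$ are identical in the two partitions and, being untouched by the exceptional pair, are scaled uniformly by $\alpha$, so they cancel. The comparison thus reduces to $A_i$ and $A_j$: moving $p$ replaces its contribution at $d'(p,c^*_i)=\alpha\,d(p,c^*_i)$ inside $A_i$ by a contribution at $d'(p,c^*_j)=d(p,c^*_j)\le \alpha\,d(p,c^*_i)$ inside $A_j$.

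The step I expect to be the main obstacle is that costs are evaluated with optimally chosen centers, so I cannot simply retain $c^*_1,\dots,c^*_k$: under $d'$ the cluster $A_i$ (which still contains $p$) might lower its own score by adopting $c^*_j$ as center, exploiting the one un-inflated distance, so $\mathrm{cost}_{d'}(\mathcal{C}^*)$ could dip below its value at the original centers. I would resolve this by a short case analysis on where the optimal center of $A_i$ under $d'$ lands. If it stays at a center other than $c^*_j$, then $A_i$'s score is exactly $\alpha$ times its original score, and a direct estimate using $d(p,c^*_j)\le \alpha\,d(p,c^*_i)$ together with monotonicity of cluster scores in the point-to-center distances gives net change $\le 0$. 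If instead it switches to $c^*_j$, then $c^*_j$ serves $p$ at the same un-inflated distance in either partition, and comparing $A_i$ against $A_i\setminus\{p\}$ with this common center again yields net change $\le 0$ (here even without the hypothesis). In both cases $\mathrm{cost}_{d'}(\mathcal{C}')\le \mathrm{cost}_{d'}(\mathcal{C}^*)$ with $\mathcal{C}'\neq \mathcal{C}^*$, contradicting uniqueness of the optimum under $d'$ and establishing center proximity; the $k$-means and $k$-center cases follow by the identical argument with the squared-distance and max aggregations replacing the sum.
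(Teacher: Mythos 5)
Your contrapositive strategy---exhibit an explicit $\alpha$-perturbation under which a different partition is at least as cheap---is a legitimate line of attack, but the cost accounting as written has a hole. After freezing the single pair $(p,c^*_j)$ and inflating everything else by $\alpha$, you reduce the comparison to the clusters $A_i$ and $A_j$ and then case-split only on where $A_i$'s optimal center lands under $d'$. In both of your cases the "net change $\le 0$" computation implicitly requires $\mathrm{score}_{d'}(A_j)\ \ge\ \alpha\,\mathrm{score}_{d}(A_j)$, and that can fail: $A_j$ can also exploit the one un-inflated pair, by adopting $p$ itself as its center, since $d'(c^*_j,p)=d(c^*_j,p)$ while every competing center sees all of its distances to $A_j$ multiplied by $\alpha$. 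Concretely, with $\alpha=3$, $A_j=\{x_1,x_2\}$, $d(x_1,x_2)=1$, $c^*_j=x_1$, $d(p,x_1)=1.1$, $d(p,x_2)=0.5$, one gets $\mathrm{score}_{d'}(A_j,p)=1.1+1.5=2.6<3=\alpha\,\mathrm{score}_d(A_j)$; and since the hypothesis only gives $\alpha d(p,c^*_i)-d(p,c^*_j)\ge 0$, this slack can be made too small to absorb the deficit, so your chain of inequalities does not close. The gap is repairable by adding a third case---if $A_j$'s $d'$-optimal center is $p$, then $\mathrm{score}_{d'}(A_j\cup\{p\})\le\mathrm{score}_{d'}(A_j)$ (serve $p$ at cost $0$ from itself) and $\mathrm{score}_{d'}(A_i\setminus\{p\})\le\mathrm{score}_{d'}(A_i)$ by monotonicity, so the swap is still free; if it is not $p$, then $\mathrm{score}_{d'}(A_j)=\alpha\,\mathrm{score}_d(A_j)$ exactly and your two cases go through---but as written that case is missing and the step "the comparison reduces to swapping $p$'s contribution" is not justified.

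For comparison, the paper avoids this bookkeeping entirely by arguing directly rather than by contraposition: it scales up only the distances internal to $C^*_i$ by exactly $\alpha$, invokes resilience to conclude that the optimal clustering is unchanged as a partition (and that $c^*_i$, $c^*_j$ remain optimal centers, since within-cluster costs scale uniformly), and then reads off $\alpha d(p,c^*_i)=d'(p,c^*_i)<d'(p,c^*_j)=d(p,c^*_j)$ from the nearest-center property of the optimal solution under $d'$. No alternative partition and no cost comparison are needed; the only feature of the perturbation that matters is that it leaves cross-cluster distances untouched. Your approach, once patched, proves the same statement, but it is doing strictly more work than necessary and is more fragile precisely because evaluating the cost of a partition under $d'$ forces you to control where every affected cluster's optimal center can move.
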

\begin{proof}
Let $C^*_i$ and $C^*_j$ be any two clusters in the optimal clustering and pick any $p \in C^*_i$. Assume we blow up all the pairwise distances within cluster $C^*_i$ by a factor of
$\alpha$. As this is a legitimate perturbation of the metric, it still
holds that the optimal clustering under this perturbation
is the same as the original optimum. Hence, $p$ is still assigned to
the same cluster. Furthermore, since the distances within $C^*_i$ were
all changed by the same constant factor, $c_i^*$ will still remain an
optimal center of cluster $i$. The same
holds for cluster $C^*_j$. It follows that even in this
perturbed metric, $p$ prefers $c^*_i$ to $c^*_j$. Hence $\alpha d(p,
c^*_i) = d'(p,c^*_i) < d'(p,c^*_j) = d(p,c^*_j)$.
\end{proof}

\begin{corollary}
\label{cor-distance-pts-different-clusters}
For every point $p$ and its center $c^*_i$, and for every point $p'$ from a different cluster, it follows that $d(p,p') > (\alpha-1) d(p,c^*_i)$.
\end{corollary}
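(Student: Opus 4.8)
The plan is to derive the bound directly from the $\alpha$-center proximity property established in Fact~\ref{fact-dist-to-other-centers}, applied to \emph{both} endpoints $p$ and $p'$, together with two applications of the triangle inequality. Let $c^*_j$ denote the center of the cluster containing $p'$; since $p$ and $p'$ lie in different clusters, $c^*_j \neq c^*_i$, so center proximity is available relative to each of the two centers, and this symmetry is what I intend to exploit.

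First I would record the two proximity inequalities. Applying $\alpha$-center proximity to $p$ (whose center is $c^*_i$) against the other center $c^*_j$ gives $d(p,c^*_j) > \alpha\, d(p,c^*_i)$. Applying it to $p'$ (whose center is $c^*_j$) against the other center $c^*_i$ gives $d(p',c^*_i) > \alpha\, d(p',c^*_j)$, equivalently $d(p',c^*_j) < \tfrac{1}{\alpha} d(p',c^*_i)$. I would then connect $p$ and $p'$ through the triangle inequalities $d(p,c^*_j) \le d(p,p') + d(p',c^*_j)$ and $d(p',c^*_i) \le d(p,p') + d(p,c^*_i)$.

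Chaining these estimates, $\alpha\,d(p,c^*_i) < d(p,c^*_j) \le d(p,p') + d(p',c^*_j) < d(p,p') + \tfrac1\alpha d(p',c^*_i) \le d(p,p') + \tfrac1\alpha\big(d(p,p')+d(p,c^*_i)\big)$; collecting terms and multiplying through by $\alpha$ yields $(\alpha^2-1)\,d(p,c^*_i) < (\alpha+1)\,d(p,p')$, and dividing by $\alpha+1$ gives exactly $d(p,p') > (\alpha-1)\,d(p,c^*_i)$.

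The one place to be careful --- and the reason both applications of center proximity are genuinely needed --- is the step bounding $d(p',c^*_j)$. Using only the trivial fact that $c^*_j$ is $p'$'s nearest center, i.e.\ $d(p',c^*_j)\le d(p',c^*_i)$, the same computation would give only the weaker constant $(\alpha-1)/2$. It is the stronger multiplicative gap $d(p',c^*_j) < \tfrac1\alpha d(p',c^*_i)$ coming from center proximity at $p'$ that upgrades the bound to the claimed factor $\alpha-1$; checking that the algebra indeed collapses via $(\alpha-1)(\alpha+1)=\alpha^2-1$ is the only remaining routine verification.
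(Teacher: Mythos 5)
Your proof is correct. It differs from the paper's argument in organization rather than in ingredients: the paper also invokes Fact~\ref{fact-dist-to-other-centers} together with the triangle inequality, but it splits into two cases according to whether $d(p',c^*_j) \geq d(p,c^*_i)$ or not, and in each case uses center proximity at only \emph{one} of the two points ($p'$ in the first case, $p$ in the second), obtaining the $(\alpha-1)$ bound separately each time. You instead apply center proximity at both $p$ and $p'$ simultaneously and chain the two resulting inequalities through two triangle inequalities, so the case analysis disappears and the bound falls out of the single identity $(\alpha^2-1)=(\alpha+1)(\alpha-1)$. Your case-free version is arguably cleaner and is closer in spirit to the paper's own proof of Lemma~\ref{lem-general-metric}, which likewise takes a weighted combination of several center-proximity inequalities; the paper's two-case proof has the mild advantage of showing that in each regime a single application of center proximity already suffices. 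Both arguments are valid and yield exactly the stated strict inequality $d(p,p') > (\alpha-1)\,d(p,c^*_i)$.
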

\begin{proof}
Denote by $c^*_j$ the center of the cluster that $p'$ belongs to. Now, consider two cases. Case (a): $d(p',c^*_j) \geq d(p,c^*_i)$. In this case, by traingle inequality we get that  $d(p,p') \geq d(p',c^*_i) - d(p,c^*_i)$. Since the data instance is stable to $\alpha$-perturbations, Fact~\ref{fact-dist-to-other-centers} gives us that $d(p',c_i^*) > \alpha d(p',c_j^*)$. Hence we get that $d(p,p') > \alpha d(p',c^*_j) - d(p,c^*_i)$ $\geq (\alpha-1)d(p,c^*_i)$. Case (b): $d(p',c^*_j) < d(p,c^*_i)$.  Again by traingle inequality we get that $d(p,p') \geq d(p,c^*_j)-d(p',c^*_j) > \alpha d(p,c^*_i) - d(p',c^*_j) > (\alpha-1)d(p,c^*_i)$.
\end{proof}

A key ingredient in the proof of Theorem~\ref{thm-3-perturbation-resilience} is the
{\em tree-clustering} formulation of Balcan et.~al \cite{Balcan08}.  In
particular, we prove that if an instance satisfies $\alpha$-center proximity
for $\alpha \geq 3$ (in the case of finite metrics without Steiner
points) or for $\alpha \geq 2+\sqrt{3}$ (for general metrics) then it
also satisfies the ``min-stability property'' (defined below). The
min-stability property, as shown in \cite{Balcan08}, is sufficient
(and necessary) for the Single-Linkage algorithm to produce
a tree such that the optimal clustering is some pruning of this
tree. In order to define the ``min-stability'' property, we first introduce the following notation. For any two subsets $A, B \subset S$, we denote the minimum distance between $A$ and $B$ as $d_{\min}(A,B) = \min \{ d(a,b) \ |\ a\in A, b\in B\}$.

\begin{definition}
\label{def:min-stability}
A clustering instance satisfies the
\emph{min-stability property} if for any two clusters $C$ and $C'$ in
the optimal clustering, and any subset $A \subsetneq C$,
it holds that $d_{\min} (A, C\setminus A) \leq
d_{\min}(A,C')$.
\end{definition}
In words, the min-stability property means that for any set $A$ that
is a strict subset of some cluster $C$ in the optimal clustering, the closest point to
$A$ is a point from $C\setminus A$, and not from some other cluster.
The next two lemmas lie at the heart of our algorithm. 

\begin{lemma}
\label{lem-inner-cluster-attraction}
A clustering instance in which centers must be data points that
satisfies $\alpha$-center proximity for $\alpha \geq 3$ (for a
center-based clustering objective), also 
satisfies the min-stability property.
\end{lemma}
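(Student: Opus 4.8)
The plan is to prove the inequality of Definition~\ref{def:min-stability} directly. Fix two optimal clusters, write $C = C^*_i$ for the cluster with center $c^*_i$ and let $C'$ be the other cluster with center $c'$, and fix any $A \subsetneq C$. Let $a \in A$ and $p' \in C'$ be a pair achieving $d_{\min}(A, C') = d(a, p')$; the goal is to exhibit a point of $C \setminus A$ lying within distance $d(a,p')$ of some point of $A$. The argument will split into two cases according to whether the center $c^*_i$ lies in $A$ or in $C \setminus A$. Crucially, the hypothesis that centers must be data points is exactly what lets me treat $c^*_i$ as an honest member of one of these two sets and use it as a witness bounding $d_{\min}(A, C\setminus A)$.

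In the first case, $c^*_i \in C \setminus A$. Then since $a \in A$ we immediately have $d_{\min}(A, C\setminus A) \le d(a, c^*_i)$. Corollary~\ref{cor-distance-pts-different-clusters}, applied to $a$ (whose center is $c^*_i$) and $p'$ (in a different cluster), gives $d(a, p') > (\alpha - 1)\, d(a, c^*_i) \ge 2\, d(a, c^*_i) > d(a, c^*_i)$, so $d_{\min}(A, C\setminus A) \le d(a, c^*_i) < d(a, p') = d_{\min}(A, C')$, as desired. This case in fact only uses $\alpha > 2$.

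The second case, $c^*_i \in A$, is where the work lies and where the factor $3$ becomes tight. Here I take $q$ to be the point of $C\setminus A$ closest to $c^*_i$, so that $d_{\min}(A, C\setminus A) \le d(c^*_i, q)$, and I aim to show $d(c^*_i, q) < d(a, p')$. Two ingredients combine. First, since $q$ is assigned to $c^*_i$ rather than $c'$, $\alpha$-center proximity gives $d(q, c') > \alpha\, d(q, c^*_i)$; feeding the triangle bound $d(q,c') \le d(q, c^*_i) + d(c^*_i, c')$ into this yields $d(q, c^*_i) < \tfrac{1}{\alpha - 1}\, d(c^*_i, c')$. Second, I bound the inter-center distance by routing it through the witnessing pair, $c^*_i \to a \to p' \to c'$, and applying Corollary~\ref{cor-distance-pts-different-clusters} at both ends (once to $a$ with its center $c^*_i$, once to $p'$ with its center $c'$), obtaining $d(c^*_i, c') < \big(1 + \tfrac{2}{\alpha - 1}\big) d(a, p') = \tfrac{\alpha + 1}{\alpha - 1}\, d(a, p')$.

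Chaining the two bounds gives $d(c^*_i, q) < \tfrac{\alpha + 1}{(\alpha - 1)^2}\, d(a, p')$, and the inequality $\tfrac{\alpha+1}{(\alpha-1)^2} \le 1$ is equivalent to $\alpha(\alpha - 3) \ge 0$, i.e.\ to $\alpha \ge 3$. Hence for $\alpha \ge 3$ we conclude $d_{\min}(A, C\setminus A) \le d(c^*_i, q) < d(a, p') = d_{\min}(A, C')$, completing the case and the lemma. I expect the main obstacle to be precisely this second case: one must route the inter-center distance through the pair $(a,p')$ so that every triangle-inequality step can be charged against $d(a,p')$ via center proximity, and it is this double application of Corollary~\ref{cor-distance-pts-different-clusters} together with the single-step bound on $d(q,c^*_i)$ that makes the constant come out to exactly $3$.
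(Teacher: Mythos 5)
Your proof is correct, and while it shares the paper's overall skeleton---split on whether $c^*_i$ lies in $A$ or in $C\setminus A$, using the hypothesis that centers are data points to make $c^*_i$ a usable witness---the main case is executed by a genuinely different route. The paper argues by contradiction: assuming $d_{\min}(A,C\setminus A) \geq d_{\min}(A,C')$, it takes $q\in C\setminus A$ nearest to $a$, uses the contradiction hypothesis to get $d(q,c^*_i)\geq d(a,p')$, and then walks the path $q \to c^*_i \to a \to p' \to c'$ to conclude $d(q,c') < 3\,d(q,c^*_i) \leq \alpha\, d(q,c^*_i)$, violating center proximity at $q$. You instead give a direct bound that pivots on the inter-center distance $d(c^*_i,c')$, a quantity the paper never introduces: center proximity at $q$ plus one triangle inequality gives $d(q,c^*_i) < \frac{1}{\alpha-1}d(c^*_i,c')$, and Corollary~\ref{cor-distance-pts-different-clusters} applied at both ends of $c^*_i \to a \to p' \to c'$ gives $d(c^*_i,c') < \frac{\alpha+1}{\alpha-1}\,d(a,p')$. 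Both arguments spend center proximity at the same three points ($a$, $p'$, $q$) and land on the same algebraic threshold $\alpha(\alpha-3)\geq 0$, but yours needs no contradiction hypothesis, makes the origin of the constant $3$ transparent, and in fact proves something slightly stronger: \emph{every} point of $C\setminus A$ (your choice of $q$ as the one nearest $c^*_i$ is immaterial) lies within $d_{\min}(A,C')$ of the center $c^*_i\in A$. The only nitpicks are cosmetic: Corollary~\ref{cor-distance-pts-different-clusters} is stated in the paper as a consequence of perturbation resilience, but its derivation uses only center proximity, so invoking it under the lemma's weaker hypothesis is legitimate (the paper does the same); and your aside that Case 1 needs only $\alpha>2$ is right (indeed $\alpha\geq 2$ suffices, since the corollary's inequality is strict).
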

\begin{proof}
Let $C^*_i, C^*_j$ be any two clusters in the target clustering. Let
$A$ and $A'$ be any two subsets s.t.~$A\subsetneq C^*_i$ and
$A'\subseteq C^*_j$.  Let $p \in A$ and $p' \in A'$ be the two points
which obtain the minimum distance $d_{\min}(A,A')$. Let $q\in
C^*_i\setminus A$ be the nearest point to $p$. Also, denote by $c^*_i$
and $c^*_j$ the centers of clusters $C^*_i$ and $C^*_j$ respectively.

For the sake of contradiction, assume that $d_{\min}(A, C^*_i\setminus
A) \geq d_{\min}(A, A')$.    
Suppose $c^*_i\notin A$. This means that $d(p,p') = d_{\min}(A,A')
\leq d_{\min}(A, C^*_i\setminus A) \leq d(p,c^*_i)$. As $\alpha \geq
3$, this contradicts
Corollary~\ref{cor-distance-pts-different-clusters}. 

Thus we may assume $c^*_i\in A$.  It follows that $d(q,c^*_i)
\geq d(p,p') > (3-1)d(p,c^*_i) = 2d(p,c^*_i)$, so $d(p,c^*_i) <
d(q,c^*_i)/2$. We therefore have that $d(p',c^*_i) \leq d(p, p') +
d(p,c^*_i) \leq 3d(q,c^*_i)/2$. This implies that $d(p',c^*_j)$ $<$
$d(p',c^*_i)/\alpha$ $<$ $d(q,c^*_i)/2$, and thus $d(q,c^*_j) \leq
d(q,c^*_i) + d(c^*_i,p) + d(p,p') + d(p',c^*_j) < 3d(q,c^*_i) \leq
\alpha d(q,c^*_i)$. This contradicts
Fact~\ref{fact-dist-to-other-centers}. 
\end{proof}

\begin{lemma}
\label{lem-general-metric}
A clustering instance in which centers need not be data points that
satisfies $\alpha$-center proximity for $\alpha \geq 2+\sqrt{3}$ (for a
center-based clustering objective), also
satisfies the 
min-stability property. 
\end{lemma}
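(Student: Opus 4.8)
The plan is to follow the same skeleton as the proof of Lemma~\ref{lem-inner-cluster-attraction}, retaining its notation: fix clusters $C^*_i, C^*_j$, subsets $A \subsetneq C^*_i$ and $A' \subseteq C^*_j$, let $p \in A$, $p' \in A'$ realize $d_{\min}(A,A')$, and let $q \in C^*_i \setminus A$ be the point of $C^*_i \setminus A$ closest to $p$. Assuming for contradiction that $d_{\min}(A, C^*_i \setminus A) \geq d_{\min}(A,A')$, I would aim to show that $q$ is closer to $c^*_j$ than $\alpha$-center proximity permits, i.e.\ that $d(q, c^*_j) < \alpha\, d(q, c^*_i)$, contradicting Fact~\ref{fact-dist-to-other-centers}. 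The assumption immediately yields $d(p,q) \geq d_{\min}(A, C^*_i \setminus A) \geq d(p,p')$, which is the one inequality I would carry over unchanged.

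The essential difference from Lemma~\ref{lem-inner-cluster-attraction} is that here $c^*_i$ need not be a data point, so the clean case split on whether $c^*_i \in A$ is no longer available; in particular I cannot assert $d(q,c^*_i) \geq d(p,p')$ for free. I would compensate using Corollary~\ref{cor-distance-pts-different-clusters} twice: since $p$ and $p'$ lie in different clusters it gives both $d(p,c^*_i) < d(p,p')/(\alpha-1)$ and $d(p',c^*_j) < d(p,p')/(\alpha-1)$. Combined with the triangle inequality $d(p,q) \leq d(p,c^*_i) + d(c^*_i,q)$ together with $d(p,q) \geq d(p,p')$, this produces the lower bound $d(q,c^*_i) \geq d(p,p') - d(p,c^*_i)$, which replaces the missing membership argument.

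Finally I would bound $d(q,c^*_j)$ from above along the path $q \to c^*_i \to p \to p' \to c^*_j$, namely $d(q,c^*_j) \leq d(q,c^*_i) + d(c^*_i,p) + d(p,p') + d(p',c^*_j)$, substitute the three estimates above (writing everything in terms of $d(p,p')$ and $d(q,c^*_i)$), and ask for which $\alpha$ the resulting inequality forces $d(q,c^*_j) < \alpha\, d(q,c^*_i)$. I expect this last algebraic step to be the main obstacle: it is where the weaker lower bound on $d(q,c^*_i)$ costs us, and tracking the constants reduces the requirement to $\alpha^2 - 4\alpha + 1 \geq 0$, whose relevant root is exactly $\alpha = 2 + \sqrt{3}$. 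Verifying that this threshold is precisely what the bookkeeping demands — rather than something slightly larger — is the delicate part of the argument.
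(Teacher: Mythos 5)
Your proposal is correct and, once the algebra is carried out, it lands on exactly the same quadratic $\alpha^2-4\alpha+1\geq 0$ via the same three applications of center proximity (at $p$, at $p'$, and at $q$) that the paper's proof uses; the differences are organizational — the paper sums the three resulting inequalities directly with weights $1-\frac{1}{\alpha+1}-\frac{1}{\alpha-1}$, $\frac{1}{\alpha+1}$, $\frac{1}{\alpha-1}$, while you argue by contradiction and let Corollary~\ref{cor-distance-pts-different-clusters} (legitimately, since its proof uses only center proximity and the triangle inequality, so it survives Steiner-point centers) package the first two. I checked the chain your bounds produce, namely $(\alpha-2)\,d(p,p') < (\alpha-1)\,d(q,c^*_i) < \frac{\alpha+1}{\alpha-1}\,d(p,p')$, and it indeed forces $\alpha^2-4\alpha+1<0$, so the contradiction closes exactly at $\alpha \geq 2+\sqrt{3}$ as you anticipated.
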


\begin{proof}
As in the proof of Lemma \ref{lem-inner-cluster-attraction}, let
$C^*_i, C^*_j$ be any two clusters in the target clustering and let 
$A$ and $A'$ be any two subsets s.t.~$A\subsetneq C^*_i$ and
$A'\subseteq C^*_j$.  Let $p \in A$ and $p' \in A'$ be the two points
which obtain the minimum distance $d_{\min}(A,A')$ and let $q\in
C^*_i\setminus A$ be the nearest point to $p$.   Also, as in the proof
of Lemma \ref{lem-inner-cluster-attraction}, let $c^*_i$
and $c^*_j$ denote the centers of clusters $C^*_i$ and $C^*_j$
respectively (though these need not be datapoints). 

By definition of center-proximity, we have the following inequalities:
\begin{eqnarray*}
d(p,p')+d(p',c_j^*) &\! > \! & \alpha d(p,c_i^*) \mbox{~~~[c.p.~applied to
$p$]}\\
d(p,p')+d(p,c_i^*) &\! > \! & \alpha d(p',c_j^*) \mbox{~~[c.p.~applied to
$p'$]}\\
\lefteqn{\hspace*{-1.05in} d(p,p') + d(p',c_j^*) + d(p,q) \  > \ \alpha (d(q,p) -
d(p,c_i^*))}\\
\lefteqn{\mbox{ \hspace*{-1.0in} [center proximity applied to $q$ and triangle ineq.]}}
\end{eqnarray*}
Multiplying the first inequality by $1 - \frac{1}{\alpha+1} -
\frac{1}{\alpha-1}$, the second by $\frac{1}{\alpha+1}$, the third
by $\frac{1}{\alpha-1}$, and summing them together we 
get  
$$\textstyle d(p,p') > \frac{\alpha^2 - 4\alpha+1}{\alpha-1} d(p,c_i^*) + d(q,p),$$ 
which for $\alpha = 2 + \sqrt{3}$ implies $d(p,p') > d(q,p)$ as desired.
\end{proof}

\subsection{The Algorithm}
\label{subsec:algorithm}

As mentioned, Balcan et al~\cite{Balcan08} proved (Theorem~$2$) that
if an instance satisfies min-stability, then the tree on clusters
produced by the single-linkage algorithm contains the optimal
clustering as some $k$-pruning of it.  I.e., the tree produced by
starting with $n$ clusters of size $1$ (viewed as leaves), and at each
step merging the two clusters $C$,$C'$ minimizing $d_{\min}(C,C')$
(viewing the merged cluster as their parent)
until only one cluster remains.  Given the structural results proven
above, our algorithm (see
Figure~\ref{fig:the-algorithm}) simply uses this clustering tree and
finds the best $k$-pruning using dynamic programming.

\begin{figure*}[ht]
\fbox{
\begin{minipage} {0.95\textwidth} 
\begin{enumerate}
\item Run Single-Linkage until only one cluster remains,
producing the entire tree on clusters.
\item Find the best $k$-pruning of the tree by dynamic programming using the equality 
$$
\textrm{best-}k\textrm{-pruning}(T) = \min_{0<k'<k} \{ \textrm{best-}k'\textrm{-pruning}(T\textrm{'s left child}) \ + \textrm{best-}(k-k')\textrm{-pruning}(T\textrm{'s right child}) \}
$$
\end{enumerate} 
\end{minipage}
}
\caption{\label{fig:the-algorithm} \small{Algorithm to find the
optimal $k$-clustering of instances satisfying $\alpha$-center
proximity.  The algorithm is described for the case (as in $k$-median or
$k$-means) that $\Phi$ defines the overall score to be a sum over
individual cluster scores. If it is a maximum (as in $k$-center) then
replace ``$+$'' with ``max'' above.}}
\end{figure*}

\begin{proof}[Proof of Theorem~\ref{thm-3-perturbation-resilience}]
By Lemmas~\ref{lem-inner-cluster-attraction} and
\ref{lem-general-metric}, the data satisfies the
min-stability property, which as shown in \cite{Balcan08} is
sufficient to guarantee that some pruning of the single-linkage
hierarchy is the target clustering. We then find the 
optimal clustering using dynamic programming by examining
$k$-partitions laminar with the single-linkage clustering tree. The
optimal $k$-clustering of a tree-node is either the entire subtree as
one cluster (if $k=1$), or the minimum over all choices of
$k_1$-clusters over its left subtree and $k_2$-clusters over its right
subtree (if $k>1$). Here $k_1, k_2$ are positive integers, such that
$k_1 +k_2 = k$. Therefore, we just traverse the tree bottom-up,
recursively solving the clustering problem for each
tree-node.  By assumption that the clustering objective is separable,
so each step including the base-case can be performed in polynomial
time.  For the case of $k$-median in a finite metric, for example, one
can maintain 
a $n\times O(n)$ table for all possible centers and all
possible clusters in the tree, yielding a running time of $O(n^2+nk^2)$.
For the case of $k$-means in Euclidean space, one can compute the cost
of a single cluster by computing the center as just the average of all
its points.  In general, the overall running time is $O(n(k^2 +
T(n)))$, where $T(n)$ denotes the time it takes to compute the cost of
a single cluster.
\end{proof}


\subsection{Some Natural Barriers}
\label{subsec:limitations-of-our-approach}

We complete this section with a discussion of barriers of our
approach. First, our algorithm indeed fails on some
finite metrics that are $(3-\epsilon)$-perturbation resilient.  For
example, consider the instance shown in
Figure~\ref{fig:single-linkage-fail}. In this instance, the clustering
tree produced by single-linkage is not laminar with the optimal
$k$-median clustering. It is easy to check that this instance is
resilient to $\alpha$-perturbations for any $\alpha < 3$. 

\begin{figure}[h]
\centering
\includegraphics{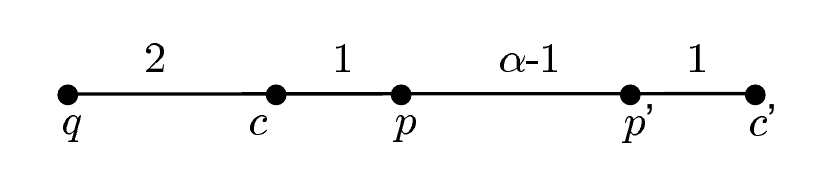}
\caption{\label{fig:single-linkage-fail} \small{A finite metric
$k$-median instance with $2< \alpha < 3$ where our algorithm
fails. The optimal $2$-median clustering is $\{c,p,q\},\{c',p'\}$. In
contrast, when we run our algorithm over on this instance, single
linkage first connects $\{c,p\}$ with $\{c',p'\}$, and only then
merges these $4$ points with $q$.}}
\end{figure}

Second, observe that our analysis, though emanating from perturbation
resilience, only uses center proximity.  We
next show that for general metrics, one cannot hope
to solve (in poly-time) $k$-median instances satisfying
$\alpha$-center proximity for $\alpha<3$.  This is close to our upper
bound of $2 + \sqrt{3}$ for general metrics.

\begin{theorem}
\label{thm:NP-hard-center-proximity}
For any $\alpha<3$, the problem of solving $k$-median instances over
general metrics that satisfy $\alpha$-center proximity is \NP-hard.
\end{theorem}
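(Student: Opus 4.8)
The plan is to reduce from $3$-Dimensional Matching (equivalently, Exact Cover by $3$-Sets) to $k$-median on a general metric, engineering the instance so that its optimum is always a well-separated ``matching'' clustering and hence automatically satisfies $\alpha$-center proximity for every $\alpha<3$. Given a ground set $U$ with $|U|=3m$ and a family $\mathcal{T}$ of triples, I would take one data point $p_u$ for each $u\in U$, introduce one Steiner point (candidate center) $c_T$ for each $T\in\mathcal{T}$, and let the metric be the shortest-path metric of the unit-weight incidence graph joining $p_u$ to $c_T$ precisely when $u\in T$. Setting $k=m$, a perfect matching of $\mathcal{T}$ corresponds exactly to opening $m$ centers whose triples partition $U$, so that every $p_u$ is served at distance $1$ by its unique covering center.

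The crucial clean fact is an automatic factor-$3$ separation. Since the incidence graph has edges only between element points and centers, there is no two-edge path from a point $p_u$ to a center $c_T$ with $u\notin T$: any such path needs at least three edges, $p_u-c_A-p_w-c_T$ with $u\in A$ and $w\in A\cap T$, whence $d(p_u,c_T)\ge 3$. Therefore, in any matching clustering every served point lies at distance $1$ from its own center and at distance at least $3$ from each other open center, giving $d(p_u,c^*_j)\ge 3>(3-\eps)\,d(p_u,c^*_i)$, which is precisely $(3-\eps)$-center proximity in the sense of Definition~\ref{def:center-proximity}.

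To guarantee that \emph{every} produced instance lies in the promised center-proximity class, and not only the YES-instances, I would ensure a disjoint covering always exists by adjoining a fixed family of ``dummy'' triples that partition $U$, each with covering length $1+\delta$ for a small $\delta>0$. With budget $k=m$ and every center at distance at most $1+\delta$ from only its own three elements, serving all $3m$ points cheaply forces an exact, disjoint matching; and because leaving a point uncovered costs at least $3$ whereas covering it through a dummy costs at most $1+\delta<3$, the optimum never abandons a point and never wastes a center on an overlap. Hence for every input the optimal $k$-median clustering is a perfect matching, is well separated, and --- taking $\delta$ small enough that $3/(1+\delta)\ge 3-\eps$ --- satisfies $(3-\eps)$-center proximity. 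Its optimal cost is $3m$ exactly when the original family admits a dummy-free perfect matching, and strictly larger otherwise, so an exact $k$-median solver on $(3-\eps)$-center-proximity instances decides $3$-Dimensional Matching. Note this uses only the $(3-\eps)$ separation, well below the $2+\sqrt{3}$ threshold of Theorem~\ref{thm-3-perturbation-resilience}, so there is no conflict with the algorithmic upper bound.

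The main obstacle, and the reason the theorem reads $\alpha<3$ rather than $\alpha\le 3$, is a genuine tension between hardness and separation. \NP-hardness forces the triple family to contain real overlaps --- a pairwise-disjoint family would make matching trivial --- and it is exactly an overlap, a triple $A$ meeting an element $u$ of one chosen set and an element $w$ of another, that creates the three-edge bridge $p_u-c_A-p_w-c_T$ realizing a cross-distance of \emph{exactly} $3$. Thus the separation can be pushed up to, but never strictly beyond, ratio $3$, so the construction certifies $(3-\eps)$-center proximity for every $\eps>0$ yet cannot certify $3$-center proximity. The remaining work --- checking that the shortest-path function is a bona fide metric, that no longer path undercuts the claimed distances, that the dummy family indeed forces a disjoint full covering, that a tiny generic perturbation makes the optimum unique, and that the $\delta$-gap keeps the YES/NO cost separation bounded away from $0$ as $\eps\to0$ --- is routine once the incidence-graph geometry is fixed.
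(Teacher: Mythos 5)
Your construction is essentially the paper's own proof: both build the unit-weight bipartite incidence graph of a set-cover-type instance, use the shortest-path metric so that an element is at distance $1$ from sets containing it and at distance at least $3$ from sets that do not, and derive the $(3-\eps)$ separation from a disjointness guarantee on the optimal cover (the paper reduces from Max-$k$-Coverage whose YES-instances have disjoint optimal covers, citing Feige, while you reduce from $3$-Dimensional Matching). Your dummy-triple device for making \emph{every} produced instance --- not just the YES-instances --- satisfy the center-proximity promise is a careful refinement of a point the paper's one-line argument glosses over, but the key geometric fact and the overall route are identical.
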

\begin{proof}
The proof of Theorem~\ref{thm:NP-hard-center-proximity} follows from
the classical reduction of Max-$k$-Coverage to $k$-median. In this
reduction, we create a bipartite graph where the right-hand side
vertices represent the elements in the ground set; the left-hand side
vertices represent the given subsets; and the distance between the
set-vertex and each element-vertex is $1$, if the set contains that
element. Using shortest-path distances, it follows that the distance
from any element-vertex to a set-vertex to which it does not belong to
is at least $3$. Using the fact that the \NP-hardness results for
Max-$k$-Coverage holds for disjoint sets (i.e. the optimal solution of
Yes-instances is composed of $k$ disjoint sets, see~\cite{Feige98}),
the $\alpha$-center proximity property follows. 
\end{proof}


Lastly, we comment that using Single-Linkage in the usual way (namely, stopping
when there are $k$ clusters remaining) is {\em not} sufficient to
produce a good clustering. We demonstrate this using the example shown in Figure~\ref{fig-single-linkage-till-k-not-enough}. Observe, in this instance, since $C$ contains significantly less points than $A$,$B$, or $D$, this instance is stable -- even if we perturb distances by a factor of $3$, the cost of any alternative clustering is higher than the cost of the optimal solution. 
However, because $d(A,C) > d(B,D)$, it follows that the usual version of Single-Linkage will unite $B$ and $D$, and only then $A$ and $C$. Hence, if we stop the Single-Linkage algorithm at $k=3$ clusters, we will not get the desired clustering.

\begin{figure}[h]
\centering
\includegraphics[scale=0.5]{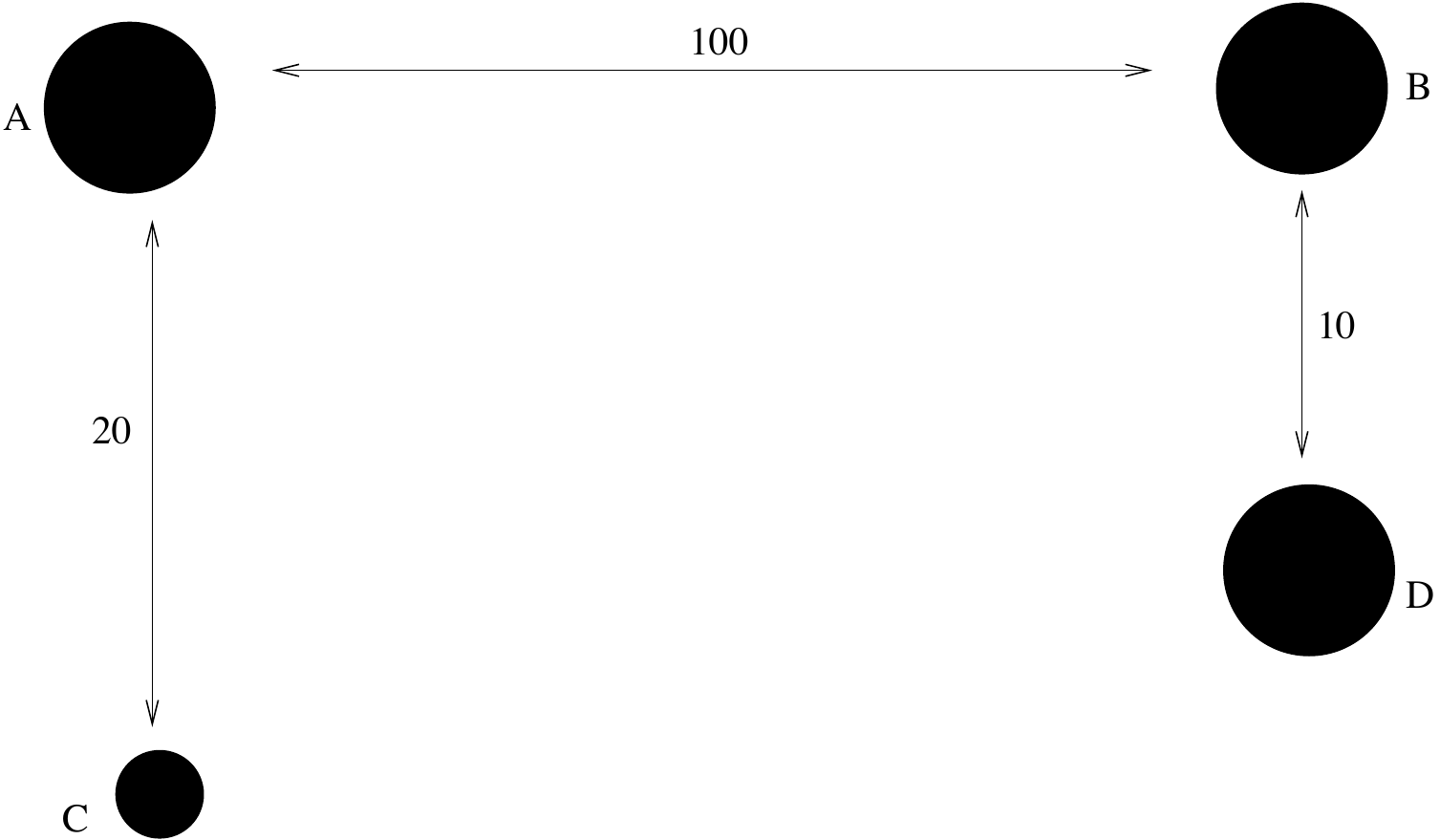}
\caption{\label{fig-single-linkage-till-k-not-enough} \small{An
example showing failure of the usual version of Single-Linkage. The
instance is composed of $4$ components, each with inner-distance
$\eps$ and outer-distance as described in the figure. However,
components $A,B$ and $D$ each contain $100$ points, whereas component
$C$ has only $10$ points. The optimal $3$-median clustering consists
of $3$ clusters: $\{A,C\},\{B\},\{D\}$ and has cost $\OPT = 200 +
300\eps$.}}  
\end{figure}


\remove{
\section{Comparison with other stability notions}
\label{sec:comparison-stability-notions}
Balcan et al.~\cite{Balcan09} defined a notion of clustering stability as follows: A given instance satisfies $(\alpha,\eps)$-approximation stability if any $(1+\alpha)$ approximation to the $k$-median/$k$-means objective is $\eps$-close to the target clustering. For such instances they show that one can, in polynomial time, find a clustering which is $O(\frac {\eps} {\alpha})$-close to the desired target clustering. In a similar spirit, we can define a robust version of the $\alpha$-perturbation property: 
\begin{definition}
A given clustering instance satisfies the $(\alpha,\eps)$-perturbation property, if the optimal solution to any $\alpha$-perturbed instance is $\eps$-close to the target clustering.
\end{definition}
We show here that the $(\alpha,\eps)$-perturbation property  is strictly weaker than the $(\alpha,\eps)$-approximation
property of Balcan et al.~\cite{Balcan09}. 
\begin{theorem}
\label{bbg-implies-perturbation}
If an instance satisfies the $(\alpha,\eps)$-approximation property for $k$-median clustering, then it also satisfies the $(1+\alpha,\eps)$-perturbation property.
\end{theorem}
\begin{proof}
Consider any $(1+\alpha)$ perturbation of an instance satisfying the $(\alpha,\eps)$-approximation property. Since every pair of distances has gone up by atmost $(1+\alpha)$, the optimal $k$-median solution for this perturbed instance must be a $(1+\alpha)$-approximation to the optimal $k$-median solution of the original instance and hence by the $(\alpha,\eps)$-approximation property, must be $\eps$-close to the taregt clustering.
\end{proof}

Similarly, for $k$-means clustering one can prove the following:
\begin{corollary}
\label{bbg-implies-perturbation-k-means}
If an instance satisfies the $(\alpha,\eps)$-approximation property for $k$-means clustering, then it also satisfies the $(\sqrt{1+\alpha},\eps)$-perturbation property.
\end{corollary}

Ostrovsky et al.~\cite{Ostrovsky06} defined a notion of clustering stability as follows: A given instance of $k$-means clustering in $\Re^d$ satisfies $\eps$-ORSS-stability of the optimal $k-1$ means solution is much larger than the optimal $k$-means solution, i.e. $\OPT_{k-1} > \frac 1 {\eps^2} \OPT_k$, for sufficiently small $\eps$. Under this condition they show that the popular Lloyd's algorithm achieves a $(1+f(\eps))$-approximation. They also give an algorithm which achieves $(1+\gamma)$-approximation to the $k$-means objective and runs in time $O(2^{O(\frac k {\gamma})}nd)$.

As observed by Balcan et al.~\cite{Balcan09}, Using Theorem $5.1$ from~\cite{Ostrovsky06} one can show that if a given instance satisfies the $\eps$-ORSS property then is also satisfies the $(O(\frac 1 {\eps^2}),\eps^2)$-approximation property. Therefore using Corollary~\ref{bbg-implies-perturbation-k-means} above, we get
\begin{corollary}
If an instance satisfies the $\eps$-ORSS property for $k$-means clustering, then it also satisfies the $(\frac 1 {\eps},\eps)$-perturbation property.
\end{corollary}

To see that the other direction does not hold, consider the
following instance. We have $k$ disjoint sets, $S_1, S_2, \ldots,
S_k$, each containing $t$ points except the last set $S_k$ which has just 2 points. The distance between any two points
that belong to the same $S_i$ is $1$. The distance between any two
points that belong to two different sets $S_i\neq S_j$ is
$\frac 1 {\eps}$, for a sufficiently small $\eps > 0$. Clearly, even if we perturb distances by a factor of $\frac 1 {\eps}$, each
point is still more attracted to its own cluster than to any other
point in any other cluster. Hence, this instance satisfies the $(\frac 1 {\eps},0)$-perturbation property. In contrast, the $\OPT$ solution has a
cost of at most $(k-1)(t-1) + 1$, but by placing two centers in one of the large clusters we obtain
a solution whose cost is $(k-2)(t-1) + t-2 + \frac 2 {\eps^2} \leq 
\frac 1 {\eps^2} \OPT$ and yet does not give us the same clustering. Therefore, this instance does not have $(\frac 1 {\eps^2},0)$-approximation
property. Similarly, it is easy to see that the instance also does not satisfy the $\eps$-ORSS property.
}

\section{Open Problems}
\label{sctn-open-problems}

There are several natural open questions left by this work.  First,
can one reduce the perturbation factor $\alpha$ needed for efficient
clustering?  As mentioned earlier, recently Balcan et
al.~(M.F.~Balcan, personal communication) have given a very
interesting algorithm that reduces the $\alpha=3$ factor needed by our
algorithm for finite metrics to $1+\sqrt{2}$.  Can one go farther,
perhaps by using further implications of perturbation-resilience
beyond center-proximity?   Alternatively, if one cannot find the {\em
optimal} clustering for small values of $\alpha$, can one
still find a near-optimal clustering, of approximation ratio better
than what is possible on worst-case instances?

In a different direction, one can also consider relaxations of the
perturbation-resilience condition.  For example, Balcan et
al.~(personal communication) also consider instances that are ``mostly
resilient'' to $\alpha$-perturbations: under
any $\alpha$-perturbation of the underlying metric, no more than a
$\delta$-fraction of the points get mislabeled under the optimal
solution.  For sufficiently large constant $\alpha$ and sufficiently
small constant $\delta$, they present
algorithms that get good approximations to the objective under this
condition.  A different kind of relaxation would be to consider a 
notion of \emph{resilience to
perturbations on average}: a clustering instance whose optimal
clustering is likely not to change, assuming the perturbation is {\em
random} from some suitable distribution.  Can this weaker notion be used to
still achieve positive guarantees?  


\bibliographystyle{plain}
\bibliography{paper}

\begin{thebibliography}{10}

\bibitem{ABCDGKNS05}
Ittai Abraham, Yair Bartal, T-H.~Hubert Chan, Kedar~Dhamdhere Dhamdhere, Anupam
  Gupta, Jon Kleinberg, Ofer Neiman, and Aleksandrs Slivkins.
\newblock Metric embeddings with relaxed guarantees.
\newblock In {\em Proc.~46th Annual IEEE Symp.~Foundations of Computer Science
  (FOCS)}, 2005.

\bibitem{Arora98}
Sanjeev Arora, Prabhakar Raghavan, and Satish Rao.
\newblock Approximation schemes for euclidean k-medians and related problems.
\newblock In {\em Proc.~30th Annual ACM Symp.~Theory of Computing (STOC)},
  1998.

\bibitem{Arya01}
Vijay Arya, Naveen Garg, Rohit Khandekar, Adam Meyerson, Kamesh Munagala, and
  Vinayaka Pandit.
\newblock Local search heuristic for k-median and facility location problems.
\newblock In {\em Proc.~33rd ACM Symp.~Theory of Computing (STOC)}, 2001.

\bibitem{ABS10}
Pranjal Awasthi, Avrim Blum, and Or~Sheffet.
\newblock Stability yields a ptas for $k$-median and $k$-means clustering.
\newblock In {\em Proc.~51st Annual IEEE Symp.~Foundations of Computer Science
  (FOCS)}, 2010.

\bibitem{Balcan09}
Maria-Florina Balcan, Avrim Blum, and Anupam Gupta.
\newblock Approximate clustering without the approximation.
\newblock In {\em Proc.~19th Annual ACM-SIAM Symp.~Discrete Algorithms (SODA)},
  2009.

\bibitem{Balcan08}
Maria-Florina Balcan, Avrim Blum, and Santosh Vempala.
\newblock A discriminative framework for clustering via similarity functions.
\newblock In {\em Proc.~40th Annual ACM Symp.~Theory of Computing (STOC)},
  2008.

\bibitem{Bartal98}
Yair Bartal.
\newblock On approximating arbitrary metrices by tree metrics.
\newblock In {\em Proc.~30th Annual ACM Symp.~Theory of Computing (STOC)},
  1998.

\bibitem{Bartal01}
Yair Bartal, Moses Charikar, and Danny Raz.
\newblock Approximating min-sum k-clustering in metric spaces.
\newblock In {\em Proc.~33rd Annual ACM Symp.~Theory of Computing (STOC)},
  2001.

\bibitem{Ben-David07}
Shai Ben-David, D{\'a}vid P{\'a}l, and Hans-Ulrich Simon.
\newblock Stability of {\it k}-means clustering.
\newblock In {\em COLT}, pages 20--34, 2007.

\bibitem{Ben-David06}
Shai Ben-David, Ulrike von Luxburg, and D{\'a}vid P{\'a}l.
\newblock A sober look at clustering stability.
\newblock In G\'{a}bor Lugosi and Hans-Ulrich Simon, editors, {\em COLT},
  volume 4005 of {\em Lecture Notes in Computer Science}, pages 5--19.
  Springer, 2006.

\bibitem{Bilu09}
Yonatan Bilu and Nati Linial.
\newblock Are stable instances easy?
\newblock 1st Symp.~Innovations in Computer Science (ICS), 2010.

\bibitem{Charikar99}
Moses Charikar, Sudipto Guha, \'{E}va Tardos, and David~B. Shmoys.
\newblock A constant-factor approximation algorithm for the k-median problem.
\newblock In {\em Proc.~31st Annual ACM Symp.~Theory of Computing (STOC)},
  1999.

\bibitem{Dasgupta08}
Sanjoy Dasgupta.
\newblock The hardness of k-means clustering.
\newblock Technical report, University of California at San Diego, 2008.

\bibitem{delaVega03}
W.~Fernandez de~la Vega, Marek Karpinski, Claire Kenyon, and Yuval Rabani.
\newblock Approximation schemes for clustering problems.
\newblock In {\em Proc.~35th Annual ACM Symp.~Theory of Computing (STOC)},
  2003.

\bibitem{Feige98}
Uriel Feige.
\newblock A threshold of $\ln n$ for approximating set cover.
\newblock {\em JACM}, 45:314--318, 1998.

\bibitem{Garey90}
Michael~R. Garey and David~S. Johnson.
\newblock {\em Computers and Intractability; A Guide to the Theory of
  NP-Completeness}.
\newblock W. H. Freeman \& Co., New York, NY, USA, 1990.

\bibitem{Guha98}
Sudipto Guha and Samir Khuller.
\newblock Greedy strikes back: Improved facility location algorithms.
\newblock In {\em Journal of Algorithms}, pages 649--657, 1998.

\bibitem{Jain02}
Kamal Jain, Mohammad Mahdian, and Amin Saberi.
\newblock A new greedy approach for facility location problems (extended
  abstract).
\newblock In {\em Proc.~34th Annual ACM Symp.~Theory of Computing (STOC)},
  pages 731--740, 2002.

\bibitem{Kumar04}
Amit Kumar, Yogish Sabharwal, and Sandeep Sen.
\newblock A simple linear time (1+ $\eps$) -approximation algorithm for k-means
  clustering in any dimensions.
\newblock In {\em Proc.~45th Annual IEEE Symp.~Foundations of Computer Science
  (FOCS)}, 2004.

\bibitem{Meila06}
Marina Meil\u{a}.
\newblock The uniqueness of a good optimum for k-means.
\newblock In {\em Proc.~23rd International Conference on Machine Learning
  (ICML)}, pages 625--632, 2006.

\bibitem{Ostrovsky06}
Rafail Ostrovsky, Yuval Rabani, Leonard~J. Schulman, and Chaitanya Swamy.
\newblock The effectiveness of {L}loyd-type methods for the $k$-means problem.
\newblock In {\em Proc.~47th Annual IEEE Symp.~Foundations of Computer Science
  (FOCS)}, pages 165--176, 2006.

\bibitem{Racke08}
Harald R\"{a}cke.
\newblock Optimal hierarchical decompositions for congestion minimization in
  networks.
\newblock In {\em Proc.~40th Annual ACM Symp.~Theory of Computing (STOC)},
  2008.

\end{thebibliography}

\end{document}